\documentclass[fleqn,10pt]{wlscirep}

\usepackage[utf8]{inputenc}
\usepackage[T1]{fontenc}
\usepackage{amsmath, amsthm, amssymb, amsfonts}  
\theoremstyle{plain}
\newtheorem{theorem}{Theorem}[section]

\usepackage[english]{babel}
\usepackage{graphicx}



\usepackage{setspace} 
\setstretch{1.5} 



\setlength{\oddsidemargin}{0in}
\setlength{\textwidth}{6.2in}
\setlength{\topmargin}{0in}
\setlength{\textheight}{8.5in}

\def\be{\begin{equation}}
\def\ee{\end{equation}}
\def\bea{\begin{eqnarray}}
\def\eea{\end{eqnarray}}

\def\e{\varepsilon}
\def\epsilon{\e}

\newcommand{\nocontentsline}[3]{}
\newcommand{\tocless}[2]{\bgroup\let\addcontentsline=\nocontentsline#1{#2}\egroup}

\DeclareMathSymbol{\leqslant}{\mathalpha}{AMSa}{"36} 
\DeclareMathSymbol{\geqslant}{\mathalpha}{AMSa}{"3E} 
\DeclareMathSymbol{\eset}{\mathalpha}{AMSb}{"3F}     
\renewcommand{\leq}{\;\leqslant\;}                   
\renewcommand{\geq}{\;\geqslant\;}                   

\title{Contingent Convertible Bonds in Financial Networks}

\author[1]{Giovanni Calice}
\author[2]{Carlo Sala}
\author[3,*]{Daniele Tantari}

\affil[1]{School of Business and Economics, Loughborough University, Loughborough, LE11 3TU, United Kingdom}
\affil[2]{Department of Economics, Finance and Accounting, ESADE Business School (Ramon Llull University), Avenida de Torreblanca 59, Barcelona, Spain}
\affil[3]{Mathematics Department, University of Bologna, Piazza di Porta San Donato 5, 40126, Bologna, Italy.}
\affil[*]{daniele.tantari@unibo.it}

\begin{abstract}
We study the role of contingent convertible bonds (CoCos) in a complex network of interconnected banks. 
By studying the system's phase transitions, we reveal that the structure of the interbank network is of fundamental importance for the effectiveness of CoCos as a financial stability enhancing mechanism. 
Our results show that, under some network structures, the presence of CoCos can increase (and not reduce) financial fragility, because of the occurring of unneeded triggers and consequential suboptimal conversions that damage CoCos investors. 
We also demonstrate that, in the presence of a moderate financial shock, lightly interconnected financial networks are more robust than highly interconnected networks. This makes them a potentially optimal choice for both CoCos issuers and buyers.
\end{abstract}

\begin{document}

\flushbottom
\maketitle

\thispagestyle{empty}

\section*{Introduction}
\label{sec: Introduction}

The 2007-2009 financial crisis highlighted the critical role of interbank interconnectedness in the stability of the global financial system and underscored a network-based approach to develop effective strategies for mitigating financial risk \cite{battiston2012debtrank,cimini2015systemic,somin2020network,petrone2018dynamic,boersma2020reducing,so2022assessing}.
Designed to reduce the impact of a lack of short-term liquidity in times of financial distress, Contingent Convertible bonds (henceforth, CoCos) have been extensively issued in the aftermath of the 2008-2009 financial crisis, with the goal of serving as a protective buffer during adverse times.
CoCos are coupon-paying bonds that, either convert into equity shares, or are (fully or partially) written-off, when the issuer reaches a pre-specified level of financial distress.
Hence, CoCos serve as regulatory instruments designed to absorb unexpected future losses of the issuing bank through automatic recapitalization triggered at a predefined level. 
This mechanism provides additional loss-absorbing capital to undercapitalized banks during periods when raising fresh equity capital would be challenging.

First proposed by \cite{Merton_1991}, who initially described the use of CoCos as a capitalization buffer during economic downturns, the literature on CoCos began to take shape with the work of \cite{Flannery_2002} and \cite{Flannery_2005}. 
As documented in a study by \cite{AVDJIEV2020}, since the first issuance by Lloyd's, over 500 billion US dollars' worth of CoCos have been issued, spanning more than 400 different issues by various banks across different countries.
In \cite{Fajardo_2019} is documented that the majority of banks issuing CoCos are large, highly leveraged institutions, particularly in BRICS and emerging economies, with the primary goal of meeting the risk provisions mandated by Basel III requirements.
A comprehensive survey of this literature can be found in \cite{Flannery_2014,AVDJIEV2020}, along with their respective references.
Notice that an effective modeling and design of CoCos is still a fundamental open and unresolved research question in the academic literature. Models where CoCos have a conversion trigger linked to accounting values are developed in \cite{Calomiris_2013,Bolton_2012,McDonald_2013,Pennacchi_2014}, while CoCos configuration from a market value perspective are analyzed in \cite{Sundaresan_2015,Glasserman_2016,Pennacchi_2018,Pennacchi_2019}.
Finally, the incentive effects of CoCos in individual banks have been investigated in \cite{Hori17}.
Overall, it is worth noticing that, due to their complexity, the configuration of CoCos is not trivial and, both academics and practitioners, still disagree on how CoCos should be structured and priced (see \cite{Flannery_2014,Greene_16,Oster_2019}).
The role of CoCos in bank capital regulation is explored in \cite{Kashyap_2008} and \cite{Squam_2010}.
Finally, \cite{Koziol_2012,Hilscher14,Berg11,Chan_2017,Goncharenko_2017,Martynova_2018,Albul_2013,Chen_2017_RFS,Goncharenko_2019} discusses how the possible configurations of CoCos can lead to corporate governance problems like debt overhang, or to risk shifting/taking incentives and possible bank failures due to extreme deleveraging.

To the best of our knowledge, this is the first paper, along with \cite{Gupta2020}, to analyze the role of CoCos in an interbank network.
Differently from \cite{Gupta2020} we focus more on the interbank network to study the contagion and stability effects of CoCos.
Specifically, we study whether the introduction of CoCos can effectively reduce the overall amount of systemic risk in the economy, in particular by mitigating the extent of default propagation that may result in systemic failures.
Additionally we investigate whether, also in presence of CoCos, interconnected systems exhibit the 'robust-yet-fragile' phenomenon discussed by (\cite{Haldane_14,Acemoglou15,Gai_2010,brini2023reinforcement}).

To achieve this, we expand upon the work of~\cite{Acemoglou15} by incorporating CoCos into the financial system and exploring their interactions with the network's topology, thereby influencing financial stability.
To shed light on the role of CoCos as financial stability instruments, we first revisit in Sections~\ref{sec: The interbank model} and~\ref{sec:Financial Networks with no CoCos}) the main results concerning the propagation of liquidity distress caused by exogenous shocks in the absence of CoCos.
Next, in Section \ref{sec:Financial Network with CoCos}, we introduce CoCos in the banks' balance sheets, and analyze their effects within the financial network.
We initially focus solely on CoCos, abstracting from the effects of equity conversion in the connected financial network. Subsequently, in Section~\ref{Subsec: Cocos with equity liquidation}, we consider a scenario where all banks in the financial network can effectively utilize CoCos proceeds.
Our main findings confirm the robust-yet-fragile nature of financial networks, as documented in the absence of CoCos (\cite{Allen_2000},~\cite{Freixas_2000}).
This characteristic persists in the presence of CoCos, where greater robustness to the contagion's trigger corresponds to a more pronounced impact of the contagion once it begins.
In particular lightly interconnected networks, that are more prone to contagion also for small shocks, demonstrate greater robustness to moderate shocks in terms of the extent of contagion when compared to highly interconnected networks.
Finally, we find that the effectiveness of CoCos as a financial stability-enhancing mechanism is contingent upon the type of network and may not always benefit the economy.
In fact, in the presence of moderate shocks, fully and highly interconnected networks can act as sources of unnecessary triggers, which are solely a consequence of network interconnections rather than bank defaults. 
This can lead to suboptimal conversions, potentially harming CoCos investors who may no longer receive coupon payments.


\section{The interbank model}
\label{sec: The interbank model}

The inclusion of any debt securities on a bank's balance sheet establishes bilateral obligations between the issuer and the owner, which can be represented by constructing an interbank network with $n\in\mathbb{N}$ nodes. 
In the network, each bank is represented by a node, and the obligations are denoted by directed edges connecting these nodes.
Specifically, a directed edge from node $i$ to node $j$ exists if bank $i$ is creditor of bank $j$, such that $y_{ij}$ represents the face value of the contract among the two banks.


Following~\cite{Acemoglou15}, we focus on three time periods, denoted as $t_0$, $t_1$ and $t_2$.
At $t_0$ each bank $i$ is endowed with an initial capital.
This initial capital can be lent to other banks, held as cash $c_i$, or invested in competitive projects.
If lent to other banks, the interbank lending takes place at $t_0$, and banks use the money borrowed to finance their investments.
The investment can yield two types of returns: i) a short-term $t_1$ stochastic return $r_i$, or ii) a long-term $t_2$ deterministic return $A_i$, if the project is held until maturity.

At time $t_1$ banks honor their senior and interbank obligations.
Senior obligations (e.g. taxes, wages), denoted as $s_i>0$, are non-negative external liabilities that have the highest priority in repayment.
For the interbank obligations, banks pay an interest rate on the principal, so that the face vale $y_{ij}$ of the $j$ debt to bank $i$ is the product of the amount borrowed and the interest rates.
Identifying with $y_i=\sum_{j\neq i} y_{ji}$ the bank's $i$ interbank obligations, it follows that the bank $i$ total liabilities at time $t_1$ are $\sum_{j \ne i} y_{ji} + s_i = y_i + s_i$.
In terms of liabilities liquidation, junior (interbank) debts have all equal seniority, and are paid after the senior debts.
After senior debts are settled, in the event of a company default, junior debts are repaid proportionally based on their face values. Denoting what bank $i$ returns to bank $j$ as $x_{ji}$, it holds
$$
x_{ji}=\phi_i \ y_{ji}, \quad \phi_i \in [0,1] \quad \text{for any} \, \, i,j=1,\ldots, n.
$$
The parameter $\phi_i$ indicates the bank's fitness, representing the fraction of the junior obligations that bank $i$ can repay to its creditors. If $\phi_i < 1$ then bank $i$ cannot honor its junior debt, while if senior debts cannot be honored, all junior debts remain completely unpaid, i.e., $\phi_i = 0$.

Banks can honor their obligations by using all the resources available at time $t_1$. In a simplified model they include the liquidity (or liquid assets) available $z_i=c_i+r_i$, i.e. the sum of cash $c_i$ and short term return from the project $r_i$, and what debtor banks effectively return at time $t_1$, i.e. $\sum_{k\neq i} x_{ik}$. 
If necessary, banks have the option to fully or partially liquidate their long-term investments, denoted as $A_i$, to acquire an additional liquidity budget of $\zeta l_i$, where $l_i \in [0, A_i]$ represents the bank's liquidation decision, and $\zeta \in [0, 1]$ signifies the cost of liquidating before maturity.

From this mechanism the repayment structure depends on the maximum amount of resources available at time $t_1$, i.e. $h_i:=z_i+\zeta A_i +\sum_{k\neq i} x_{ik}$ in relation to the total obligations, i.e. $s_i+y_i$. 
It follows that bank $i$ returns to bank $j$ the amount
\begin{equation} \label{eq:debtrule}
    x_{ji} =\phi_i \ y_{ji}=
\begin{cases}
    y_{ji},         & \text{if} \quad h_i > s_i + y_i \\
    (h_i-s_i)y_{ji}/y_i & \text{if} \quad h_i \in (s_i, s_i+y_i) \\
    0               & \text{if} \quad h_i \in (0, s_i)
\end{cases}
\end{equation}
where, in the intermediate case of junior insolvency, bank $i$ repays her creditors a fraction $y_{ji}/y_i$ of the remaining resources available after senior debts are settled. 
In the other two cases bank $i$ is either fully solvent $x_{ji}=y_{ji}$ ($\phi_i=1$), or fully insolvent $x_{ji}=\phi_i=0$.  
Note that the liquidation decision $l_i$ becomes irrelevant because when a bank is insolvent, a full liquidation of the project is required, i.e., $l_i = A$.
Eq. ($\ref{eq:debtrule}$) can be written in a more compact form as 
\begin{equation}
x_{ji}=\frac{y_{ji}}{y_i} \min ( y_{i},h_i - s_i) ^{+}=y_{ji} \min \left( 1,\frac{h_i - s_i}{y_i}\right) ^{+},
\label{eq:debt repayment rule}
\end{equation}
where $[ \cdot ]^+ = \max[\cdot, 0]$.
Dividing Eq. ($\ref{eq:debt repayment rule}$) by $y_{ji}$ and using $\phi_i = x_{ji}/y_{ji}$, we can derive the financial distress propagation rule in terms of the bank's fitness:
\be\label{rule}
\phi_i =  \min\left(1, \frac{h_i(\phi)-s_i}{y_i} \right)^+ = f_{y_i,s_i}(h_i(\boldsymbol{\phi})),
\ee
Here, we emphasize that the available resources $h_i = z_i + \zeta A_i + \sum_{k\neq i}y_{ik}\phi_k$ depend on the fitness of other banks. We also define the activation function as $f_{y,s}(h) = \min(1, \frac{h - s}{y})^+$.
Eq. $\eqref{rule}$ can be thought either as an updating rule for fitness propagation, $\phi^{t+1}_i= f_{y,s} (h_i(\boldsymbol{\phi}^t))$, or as an equilibrium defining map $F: [0,1]^n \to [0,1]^n$, $ \boldsymbol{\phi}=(\phi_1,\ldots,\phi_n) \to F(\boldsymbol{\phi})=\left( f_{y,s}(h_i(\boldsymbol{\phi}))\right)_{i=1}^n$.

In the remainder of the paper, we focus on a system of homogeneous banks, where we assume that $s_i = s$ and $y_i = y$ for all $i = 1, \ldots, n$. Furthermore, Eq. \eqref{eq:debt repayment rule} suggests that we can make the assumption, without loss of generality, that $A_i = 0$, as project liquidation simply acts as an additional asset and can thus be absorbed into the variable $z_i$.
Lastly, we consider the possibility of banks experiencing exogenous liquidity shocks. 
For each bank $i$, we model this situation as $z_i = a > s$ in the absence of shocks and as $z_i = a - \varepsilon$ in the presence of shocks.

In the above setting, starting from the initial condition $\phi^0 = (1, \ldots, 1)$ we iterate Eq. ($\ref{rule}$) until convergence to an equilibrium fitness which is the unique\cite{Acemoglou15} fixed point.
We therefore investigate the equilibrium's properties in terms of two key factors: 1) the extent of contagion denoted as $E(\boldsymbol{\phi})$ and 2) the system's distress represented as $D(\boldsymbol{\phi})$. These are defined as follows:
\begin{equation}
E(\boldsymbol{\phi}) = 1 - \frac{1}{n} \sum_{i=1}^n \delta_{\phi_i,1}; \ \ \ \ D(\boldsymbol{\phi}) = 1 - \frac{1}{n} \sum_{i=1}^n \phi_i.
\label{eq:extent_contagion}
\end{equation}
Additionally, we study the financial stability of the system as a function of 1) the \textit{topological properties} of the interbank directed and weighted network $Y=(y_{ij})$; and 2) the \textit{size} and \textit{distribution} of the shocks.

\section{Financial Networks without CoCos}
\label{sec:Financial Networks with no CoCos}

The focus of the present analysis is mainly on the role of the connectivity as a first order possible metrics of network topology.  To isolate the effect of the network connectivity from the possible noise due to node heterogeneity we consider regular networks. In the Supplementary Material we also consider more realistic networks with some level of degree heterogeneity and a macroscopic structure (e.g. assortative communities or core-periphery structure), by showing that the non trivial behavior of the contagion still can be interpreted to some extent as a result of the network local connectivity. 
A network is regular if $\sum_{j\neq i}y_{ij}=\sum_{j\neq i}y_{ji}=y$, i.e. everyone owes everyone the same amount. 
Two particular extreme cases are represented by the ring and the complete regular networks.
Specifically, a financial network is a ring network if $y_{i,i-1}=y_{1,n}=y$, and $y_{ij}= 0$ otherwise.
Under this configuration, bank $i$ is the unique creditor of bank $i-1$, and bank $1$ is the unique creditor of bank $n$, so that a default of a bank spillovers \textit{entirely} on the subsequent banks. 
Conversely, a financial network is a complete network if $y_{ij}=\frac{y}{n-1}\forall i\neq j$.
Under this setting a liability, and thus a possible bank default, is equally divided among all $n$ banks in the financial network.
With the aim of studying more realistic networks that better resemble an interbank network, we also introduce and focus most of our attention on random regular networks with different degrees of connectivity, $0 < c < \infty$. 
Connectivity is defined as the number of incoming (or outgoing) links in the network, represented as $c=\sum_{j\neq i} \mathbb{I}_{y_{ij}>0}=\sum_{j\neq i} \mathbb{I}_{y_{ji}>0}$.
Assuming the same connectivity for each bank, we obtain a regular network by dividing all the junior liabilities $y$ of a bank in equal parts between its neighbors, i.e. $y_{ij}= \frac{y}{c} \mathbb{I}_{y_{ij}>0}.$

The principal discovery in the study by \cite{Acemoglou15} concerning the resilience of ring and complete financial networks under the influence of a non-negative exogenous shock $\epsilon>0$ can be summarized (please refer to the Supplementary Material) in the following
\begin{theorem}
\label{th:nococo}
Given $\varepsilon^{\ast} = n\left( a-s\right)$ and $y^{\ast} = (n-1)\left( a-s\right)$, then:
\begin{itemize}
\item as soon as $\varepsilon < \varepsilon^{\ast}$ (small shock regime) or $y < y^{\ast}$ (low exposure regime) the extent of contagion in the ring network is larger than that in the complete network.
\item as soon as $\varepsilon > \varepsilon^{\ast}$ and $y > y^{\ast}$ default becomes systemic in both the ring and the complete networks.
\end{itemize}
\end{theorem}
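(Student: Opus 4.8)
The plan is to lean on the fixed-point characterization in~\eqref{rule}: since the map $F$ has a unique fixed point reachable by iterating from $\boldsymbol{\phi}^{0}=(1,\ldots,1)$, it suffices to exhibit one consistent equilibrium $\boldsymbol{\phi}$ for each network in each parameter regime and read off its extent of contagion. I would put a single shock on bank $1$ ($z_1=a-\varepsilon$, $z_i=a$ for $i\neq1$) and track the propagated deficit $D_i:=y(1-\phi_i)$. Rewriting~\eqref{rule} in these variables turns it into the scalar recursion $D_i=\min\!\big(y,\,[\,D_{i-1}-(z_i-s)\,]^{+}\big)$ along each edge, which makes transparent that every healthy bank absorbs up to a buffer $a-s$ of incoming shortfall before defaulting and that no single deficit can exceed the cap $y$, the point at which $\phi_i=0$.

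For the ring the incoming deficit of bank $i$ comes entirely from bank $i-1$, so this is a one-dimensional chain closed into a loop. First I would solve the open chain under the guess that bank $n$ repays in full ($D_n=0$): bank $1$ defaults iff $\varepsilon>a-s$, and thereafter $D_{k}=[\varepsilon-k(a-s)]^{+}$, so exactly $\lfloor\varepsilon/(a-s)\rfloor$ banks default and the extent is $\tfrac1n\lfloor\varepsilon/(a-s)\rfloor$. This non-wrapping solution is self-consistent precisely while $\lfloor\varepsilon/(a-s)\rfloor<n$, i.e. $\varepsilon<\varepsilon^{\ast}=n(a-s)$; the cap $y$ only intervenes when $y<y^{\ast}$, where the deficit saturates at $y$ and dies out after $\sim y/(a-s)<n$ banks. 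For the systemic branch I would argue the contrapositive: if some bank were solvent, then following the loop the deficit stays zero up to bank $1$, which re-injects $\varepsilon-(a-s)$, and full reabsorption before the loop closes would force $\varepsilon\le n(a-s)$, contradicting $\varepsilon>\varepsilon^{\ast}$; hence all banks default, and checking that the saturated deficit still reaches bank $n$ requires $y-(n-1)(a-s)>0$, i.e. $y>y^{\ast}$.

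For the complete network I would exploit the permutation symmetry of banks $2,\ldots,n$ to collapse the equilibrium to two unknowns, $\phi_1$ and the common value $\psi$, reducing~\eqref{rule} to two scalar equations. Solving them shows the candidate ``only bank $1$ defaults'' equilibrium ($\psi=1$) is consistent exactly when each neighbour can absorb its share $\tfrac{y}{n-1}(1-\phi_1)\le a-s$ of bank $1$'s shortfall, which reduces to $\varepsilon\le\varepsilon^{\ast}$ or $y\le y^{\ast}$; throughout that region the extent equals $\tfrac1n$. When instead $\varepsilon>\varepsilon^{\ast}$ and $y>y^{\ast}$, the interior all-default system is inconsistent (it forces $\varepsilon=\varepsilon^{\ast}$), so the binding equilibrium has $\phi_1=0$ and $\psi=y^{\ast}/y\in(0,1)$, every bank defaults, and the extent is $1$. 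Comparing the two networks then delivers the statement: in the first regime the ring's extent $\tfrac1n\lfloor\varepsilon/(a-s)\rfloor$ (or $\sim\tfrac1n\cdot y/(a-s)$ once the cap binds) is at least the complete network's $\tfrac1n$, while in the second both extents equal $1$.

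I expect the main obstacle to be the bookkeeping of the piecewise-linear clipping: verifying that the equilibria I write down satisfy all the $\min(1,\cdot)^{+}$ constraints simultaneously — in particular the interplay between the lower cap at $\phi_i=0$ (governed by $y$ versus $y^{\ast}$) and the upper cap at $\phi_i=1$ (governed by $\varepsilon$ versus $\varepsilon^{\ast}$) — and handling the wrap-around feedback in which $D_1$ depends on $D_n$ without circularity. A secondary subtlety is that ``larger'' in the first bullet is weak on the window $a-s<\varepsilon<2(a-s)$, where both networks register only bank $1$; I would therefore phrase the comparison as ``at least as large,'' becoming strict once $\varepsilon>2(a-s)$ or $y$ is small enough to spread the default along the ring.
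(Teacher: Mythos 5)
Your proposal is correct and follows essentially the same route as the paper's Appendix A: exhibit the explicit equilibrium for the ring by solving the chain recursion under the no-wrap guess (your deficit variable $D_i=y(1-\phi_i)$ is just a cosmetic relabeling of the paper's $\phi_i=\phi_1+(i-1)\frac{a-s}{y}$ solution), and reduce the complete network to the same two-unknown symmetric ansatz with the same two branches $\phi_{ns}=1$ versus $\phi_s=0$, $\phi_{ns}=y^{\ast}/y$. Your remark that the first bullet should read ``at least as large'' on the window $a-s<\varepsilon<2(a-s)$ is a fair refinement of the statement, not a gap.
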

Fig. \ref{fig:RR} illustrates the extent of financial contagion and banking distress as functions of the shock in the high exposure regime, ($y>y^\star$), with ring and complete networks represented by blue and green lines, respectively.
\begin{figure}
\includegraphics[width=0.5\linewidth]{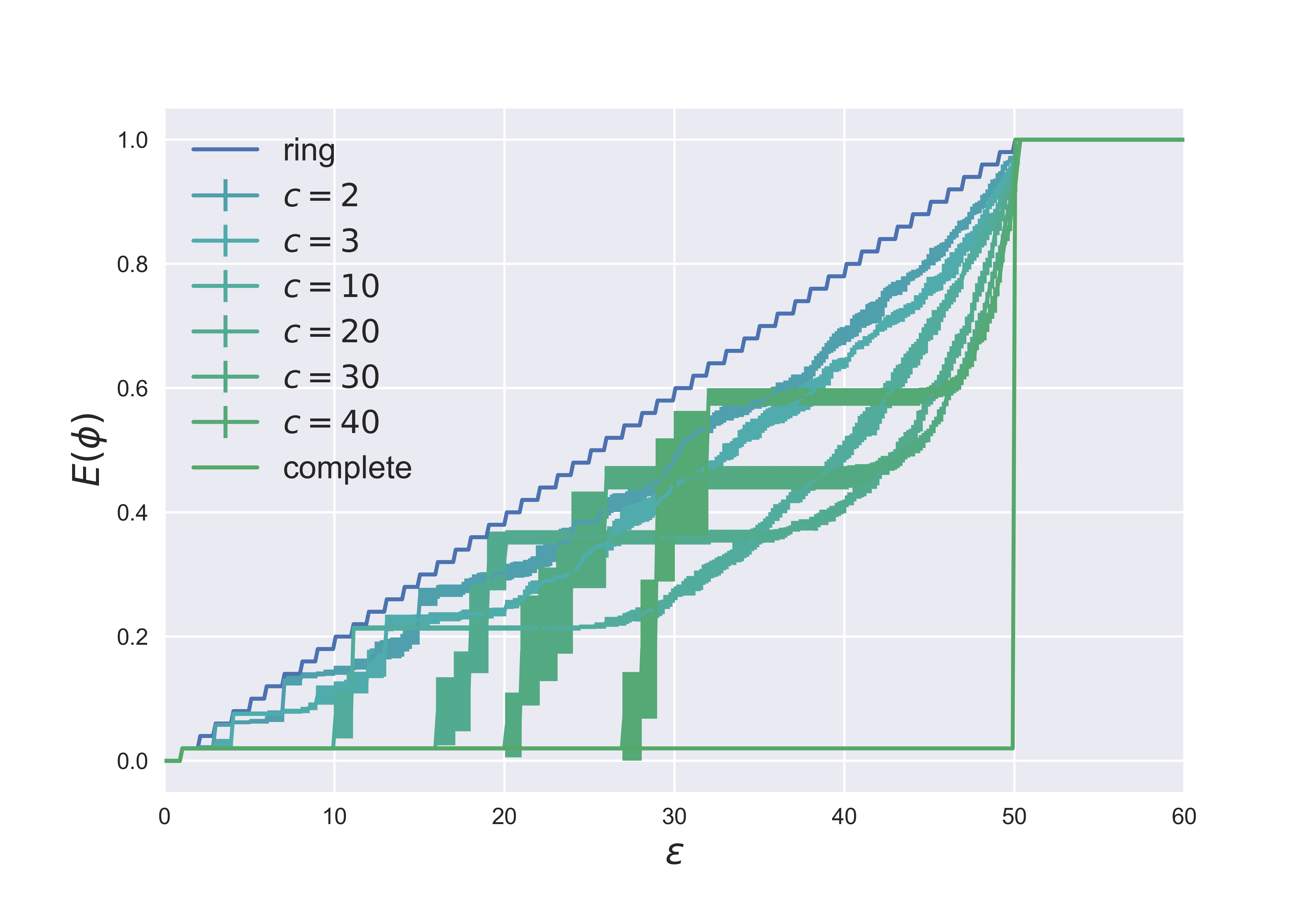}
\includegraphics[width=0.5\linewidth]{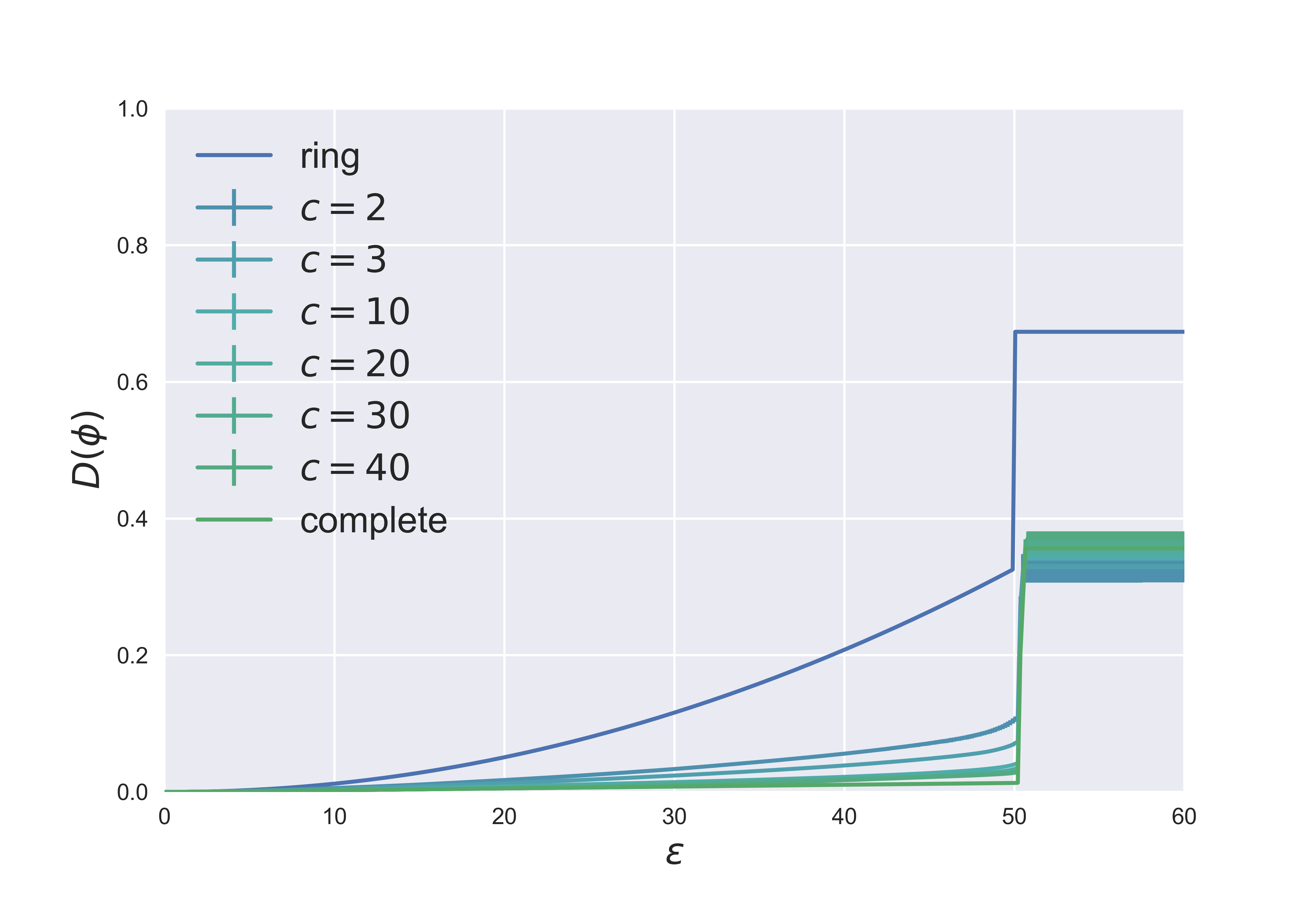}
\caption{Simulation on Random Regular networks with $N=50$, $a=21$, $s=20$, $y=75>y^\star$.
Results are averaged over $10$ different realizations, sampled from a directed configuration model.
Left panel: extent of contagion; Right panel: system's distress.
\label{fig:RR}}
\end{figure}
Two important messages emerge from the figure.
First, it is evident how the contagion becomes systemic at the same point $\epsilon = \epsilon^*$.
Secondly, contagion (default leading to default) is quicker with the ring network for small shocks. 
On the other hand, large shocks, which by definition envisage less interconnection among banks, lead to senior debts being wiped out to absorb losses.
In this regime, the ring network becomes as stable as the complete network, thus reflecting a robust-yet-fragile framework.

Fig. \ref{fig:RR} also presents, for different intermediate network connectivity, the average results over 10 realizations of random regular networks sampled from a directed configuration model \cite{bollobas1980probabilistic, Newman}.
Interestingly, when the extent of financial contagion is nearly zero, the results confirm that the complete network exhibits greater stability to small shocks, compared to all other networks.
In fact, for the ring and all intermediate networks, the extent of contagion increases with the shock size. 
In contrast, the complete network only experiences systemic contagion for $\epsilon>\epsilon^\star$, marking the previously defined phase transition with a jump.
Focusing on the intermediate networks, we can delve deeper into their behavior and responses to shocks. Our findings reveal the existence of three regimes, not just two. 
Specifically, as we progress from the ring network and increase the degree of connectivity, the system becomes progressively more resilient, with a notable jump in the extent of contagion as the shock size increases.
The jump corresponds to the point at which the shock propagates simultaneously to all neighbors of the stressed bank. 
In the limit, as the network becomes fully connected, the system remains stable until it reaches the transition point where the shock becomes systemic.
It is interesting to note that there are shock size regimes where increasing connectivity doesn't necessarily result in a greater stability. 
A more connected network tends to be more stable only \textit{before} the first jump. 
Beyond that point, the system becomes more fragile compared to cases with lower connectivity.


\section{Financial Network with CoCos}
\label{sec:Financial Network with CoCos}

The situation is different when we consider all bilateral obligations between banks to be CoCo bonds. 
A crucial element in any CoCo configuration is the trigger, which defines a critical level of the issuer's financial distress. 
When the trigger condition is met, the bond is either partially or fully written off. 
As a result, the issuer bank experiences a reduction in its exposure while the owner bank receives some equity shares in exchange (for CoCos with equity-conversion).
Various types of trigger mechanisms exist~\cite{Flannery_2014}, depending on the definition of financial distress (e.g., market- or book-value trigger). 
In our model, we define it as the ratio between a proxy of the bank's $i$ equity, $E_{i}$, and the bank's assets, $h_{i}$, at time $t_1$. 
Recall that $h_i = z_i + \sum_{j\neq i} x_{ij}$ represents the total amount of resources available, and that $s+y$ denotes the total obligations of bank's $i$ at time $t_1$. 
Under this framework, we assume a trigger to occur if
\begin{equation}
\label{eq:tr2}
\frac{E_{i}}{h_{i}}:= \frac{h_i-(s+y)}{h_i}\leq \tau \ \ \iff \ \ h_{i}\leq \frac{s+y_{i}}{1-\tau},
\end{equation}
where $\tau>0$ is a discretionary threshold.

In the case of CoCos with equity conversion, when a trigger event occurs, the bonds are either partially or fully converted into equity to address, if feasible, the shortfall up to the threshold $E_i = \tau h_i$.
Suppose that, before conversion, $E_i = \beta h_i$, with $\beta\leq \tau$.
Under this scenario the amount of bond $\Delta y$ that has to be converted into equity is $\Delta y = (\tau - \beta)h_i$, provided that $y>\Delta y$.

In the case $y \leq \Delta y$, then the entire CoCo is converted: this happens if
\begin{eqnarray}\label{eq:tr}
y \leq \Delta y \iff h_i - s - \beta h_i \leq (\tau-\beta)h_i \iff h_i \leq \frac{s}{1-\tau}.
\end{eqnarray}
From Eqs. ($\ref{eq:tr2},\ref{eq:tr}$) we can define three different scenarios.
\begin{itemize}
\item If $\frac{s}{1-\tau} \leq h_i\leq \frac{s+y}{1-\tau}$, the bond undergoes partial conversion into equity: in this case, CoCos holders receive $\Delta y = (\tau - \beta)h_i$ of converted equity, and $x_i=\sum_{j\neq i}x_{ji} = y - \Delta y = (1-\tau)h_i -s$ of unconverted CoCos.
\item If $s \leq h_i\leq \frac{s}{1-\tau}$, the entire bond is converted: in this case, CoCos holders hold $\Delta y = y $ of converted equity and no unconverted CoCo, i.e. $x_i=0$.
\item If $h_i\leq s$, equity holders absorb the losses ahead of senior creditors, and consequently, they do not receive converted equity. 
In this case, $x_i=0$.
\end{itemize}
When we combine all these cases together, we obtain
$x_i = \min [y, (1-\tau)h_i -s]^+$.
Then, dividing by $y$, we can obtain the equivalent trigger propagation rule but in terms of the banks fitness $\phi_i =x_{ij}/y_{ij}= x_i/y_i$ as
\be\label{rulecoco}
\phi_i  = f_{y,s}((1-\tau) h_i(\boldsymbol{\phi})),
\ee
where $f_{y,s}$ is defined in Eq. ($\ref{rule}$). The main difference between Eq.~\eqref{rulecoco} and Eq.~\eqref{rule} lies in the interpretation of the fitness parameter $\phi_i$. 
In the presence of CoCos, $\phi_i < 1$ does not signify default but rather the triggering of the CoCo. 
Consequently, distress propagation in this system indicates a triggering propagation. 

It is worth noticing that the continuous map defined in Eq. ~\eqref{rulecoco} admits a unique fixed point, regardless of the network structure (see Supplementary Material). 
As in the previous section, after an exogenous liquidity shock, we can analytically derive the unique equilibrium for both the ring and complete networks. 
The properties of this equilibrium are summarized in the following 
\begin{theorem}\label{th:coco}
There exist exposure thresholds 
$y^{\ast}_r(\tau)$, 
$y^{\ast}_c(\tau)$ 
and shock thresholds $\varepsilon^{\ast}_r(y,\tau)$, $\varepsilon^{\ast}_c(y,\tau)$ (their explicit expressions are provided in the Supplementary Material) such that:
\begin{itemize}
\item when $(y,\tau)\in \mathcal{S}_r =\{\varepsilon > \varepsilon^{\ast}_r(y,\tau), y > y^{\ast}_r(\tau)\}$ the shock triggers a systemic CoCo triggering in the ring network.
\item when $(y,\tau)\in \mathcal{S}_c =\{\varepsilon > \varepsilon^{\ast}_c(y,\tau), y > y^{\ast}_c(\tau)\}$ the shock triggers a systemic CoCo triggering in the complete network.
\end{itemize}
where $\mathcal{S}_r$ and $\mathcal{S}_c$ identify the systemic \textit{unstable} regions for the ring and complete networks, respectively.
It follows that their complement, $\mathcal{S}^c_r$ and $\mathcal{S}^c_c$, identify the \textit{safe} regions for the ring and complete networks, respectively.
Moreover $\mathcal{S}_r\subset \mathcal{S}_c$ and
\begin{itemize}
\item when $(y,\tau)\not\in \mathcal{S}_c $(implying $(y,\tau)\not\in \mathcal{S}_r$) the ring network is the least (and the complete network is the most stable) financial network.
\item there exist a region where $(y,\tau)\not\in \mathcal{S}_r$ but $(y,\tau)\in \mathcal{S}_c$ where the ring network is the most (and the complete network is the least) stable financial network.
\end{itemize}
\end{theorem}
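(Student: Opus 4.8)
The plan is to mirror the closed-form analysis behind Theorem~\ref{th:nococo}, exploiting that the CoCo propagation map $\boldsymbol{\phi}\mapsto(f_{y,s}((1-\tau)h_i(\boldsymbol{\phi})))_{i=1}^n$ is piecewise affine. On the region where every bank is triggered but none is wiped out, i.e. $0<\phi_i<1$ for all $i$, the activation acts linearly, $\phi_i=\frac{(1-\tau)h_i(\boldsymbol{\phi})-s}{y}$, so the equilibrium solves a linear system. First I would treat the ring. With a single shock of size $\varepsilon$ on bank $1$, the payments $x_i=y\phi_i$ obey the cyclic recursion $x_i=(1-\tau)(z_i+x_{i-1})-s$, which, writing $\rho=1-\tau$ and $c=(1-\tau)a-s$, is affine with contraction slope $\rho<1$; solving the circulant system in closed form gives $x_i=x^{\ast}-\rho^{\,i}\varepsilon/(1-\rho^n)$ with $x^{\ast}=c/(1-\rho)=\frac{(1-\tau)a-s}{\tau}$. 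For the complete network the $i\neq1$ banks share a common fitness by symmetry, so the fixed point collapses to two scalar unknowns and is solved explicitly the same way.

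Next I would read off the thresholds from the admissibility constraints $0\leq x_i<y$. Systemic triggering ($\phi_i<1$ for every $i$) is controlled by the \emph{most solvent} bank — the one furthest from the shock along the ring (here $i=n$), and the unshocked banks in the complete network; setting its fitness to $1$ pins down the onset shocks $\varepsilon^{\ast}_r(y,\tau)$ and $\varepsilon^{\ast}_c(y,\tau)$, each of the form $(x^{\ast}-y)$ times a positive factor depending on $\rho$ and $n$. The exposure thresholds $y^{\ast}_r(\tau),y^{\ast}_c(\tau)$ come from the complementary feasibility requirement: for $y$ too small a fully defaulted debtor ($\phi=0$) still fails to push its creditor below the trigger, so propagation cannot become systemic at any $\varepsilon$; locating where the all-triggered branch first becomes admissible (after patching in the $\phi_i=0$ regime) fixes these. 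The four conditions define $\mathcal{S}_r$ and $\mathcal{S}_c$, and as a consistency check I would verify that letting $\tau\to0$ recovers $\varepsilon^{\ast}=n(a-s)$ and $y^{\ast}=(n-1)(a-s)$ of Theorem~\ref{th:nococo}.

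The inclusion $\mathcal{S}_r\subset\mathcal{S}_c$ I would obtain directly from the closed forms, showing $\varepsilon^{\ast}_c\leq\varepsilon^{\ast}_r$ and $y^{\ast}_c\leq y^{\ast}_r$ on the relevant ranges; analytically this reduces to an elementary inequality such as $\frac{(1-\rho)(n-1+\rho)}{\rho^2}\leq\frac{1-\rho^n}{\rho^n}$ for $\rho\in(0,1)$, which encodes that the complete network instantly shares any shock among all $n$ banks and so tips everyone toward the trigger at once, whereas the ring requires the distress to survive a full lap of contractions by $\rho$. The two stability comparisons then follow. Outside $\mathcal{S}_c$ — hence, by the inclusion, outside $\mathcal{S}_r$ — neither network is systemic, and comparing the extent of triggering $E(\boldsymbol{\phi})$ of the two explicit equilibria yields ring $\geq$ complete, so the ring is least and the complete most stable, as in the small-shock regime of Theorem~\ref{th:nococo}. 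In the intermediate band $\mathcal{S}_c\setminus\mathcal{S}_r$ the complete network is fully triggered ($E=1$) while the ring is not ($E<1$), reversing the order.

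The main obstacle is the regime bookkeeping: the map is affine only while $0<\phi_i<1$, so the closed-form branch must be glued to the $\phi_i=1$ (no trigger) and $\phi_i=0$ (full default) branches, and I must verify that the asserted unique fixed point actually lies on the branch used to define each threshold. I expect the cleanest control is a monotonicity/continuity argument: the map $F$ is monotone and its fixed point depends monotonically and continuously on $\varepsilon$ and $y$, so the first crossing of each boundary is unambiguous and the threshold inequality giving $\mathcal{S}_r\subset\mathcal{S}_c$ can be checked on the single relevant branch rather than through an exhaustive case split.
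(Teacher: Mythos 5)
Your proposal follows essentially the same route as the paper's Appendix~\ref{app:coco}: you exploit the piecewise-affine structure of $\phi_i=f_{y,s}((1-\tau)h_i(\boldsymbol{\phi}))$, solve the ring by the geometric recursion with ratio $\rho=1-\tau$ and the complete network by the two-value symmetric ansatz $(\phi_s,\phi_{ns})$, and extract the shock thresholds from the boundary condition that the most solvent bank's fitness equals $1$ and the exposure thresholds from the $\phi_1=0$ feasibility check --- which is exactly the content of Propositions~\ref{p:1} and~\ref{p:2} and reproduces the factors $(x^{\ast}-y)\frac{1-\rho^{n}}{\rho^{n}}$ and $(x^{\ast}-y)\frac{(1-\rho)(n-1+\rho)}{\rho^{2}}$ of Equations~\eqref{cocoringshock} and~\eqref{cocofullshock}. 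If anything, your plan is slightly more explicit than the paper on two points the appendix leaves implicit: the elementary inequality $\frac{(1-\rho)(n-1+\rho)}{\rho^{2}}\leq\frac{1-\rho^{n}}{\rho^{n}}$ (equivalently $\l^c_\tau\leq\l^r_\tau$) establishing $\mathcal{S}_r\subset\mathcal{S}_c$, and the monotonicity argument gluing the affine branches.
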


The proof, detailed in the Supplementary Material, relies once more on the analytical solution of the fixed-point Eq.~\eqref{rulecoco}.

The first part of the theorem asserts that, in the presence of CoCos, the unstable region (characterized by large shocks and high exposure) is not universal, as in the case of vanilla bonds. Instead, it strongly depends on the network structure, resulting in different thresholds for various levels of shock size ($\epsilon$) and exposure ($y$).
Additionally, as depicted in Fig~\ref{fig:phase}, the unstable regions (or equivalently the safe regions) are strongly influenced by the triggering parameter $\tau$.
\begin{figure}[ht!]
\includegraphics[width=0.5\linewidth]{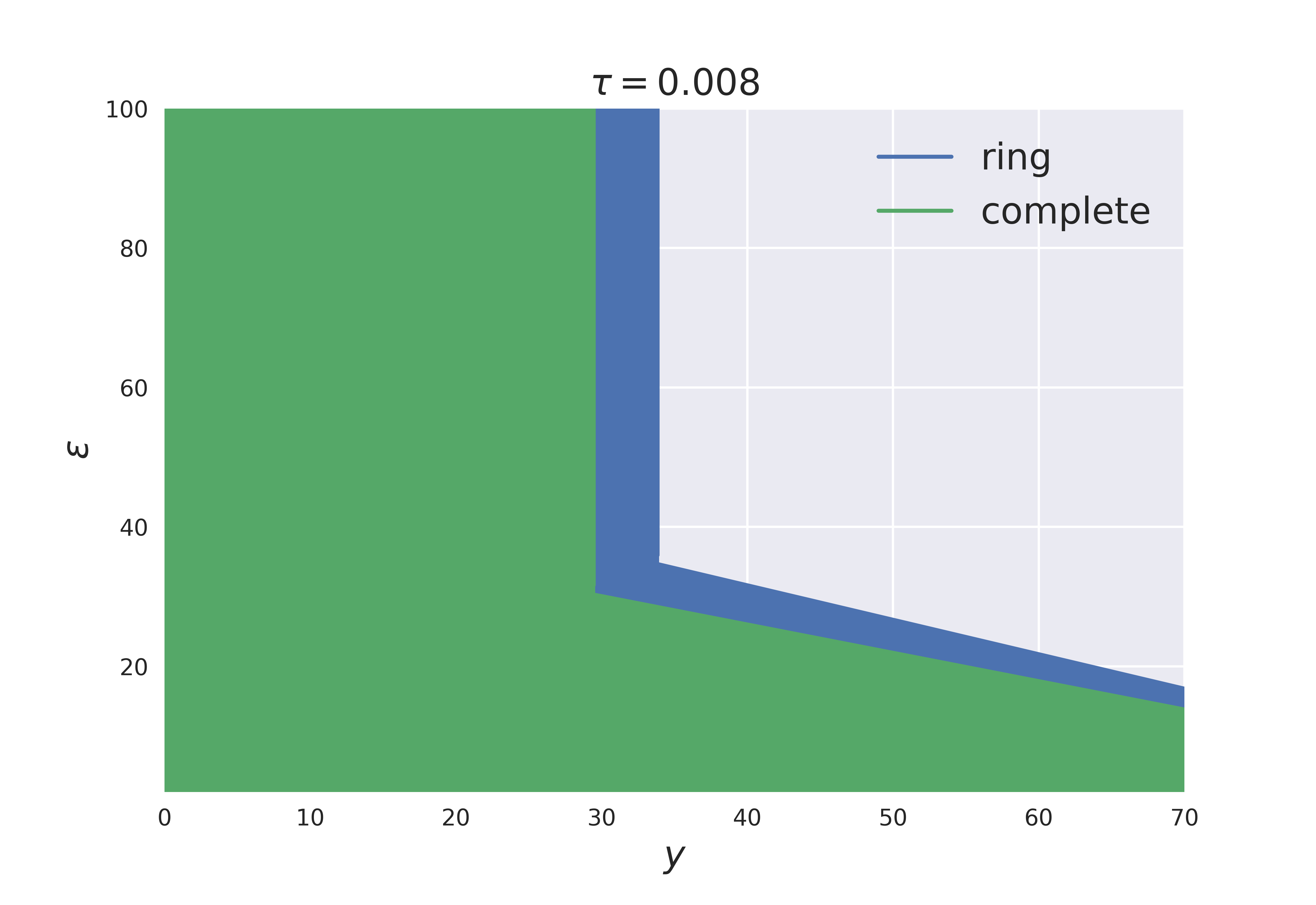}
\includegraphics[width=0.5\linewidth]{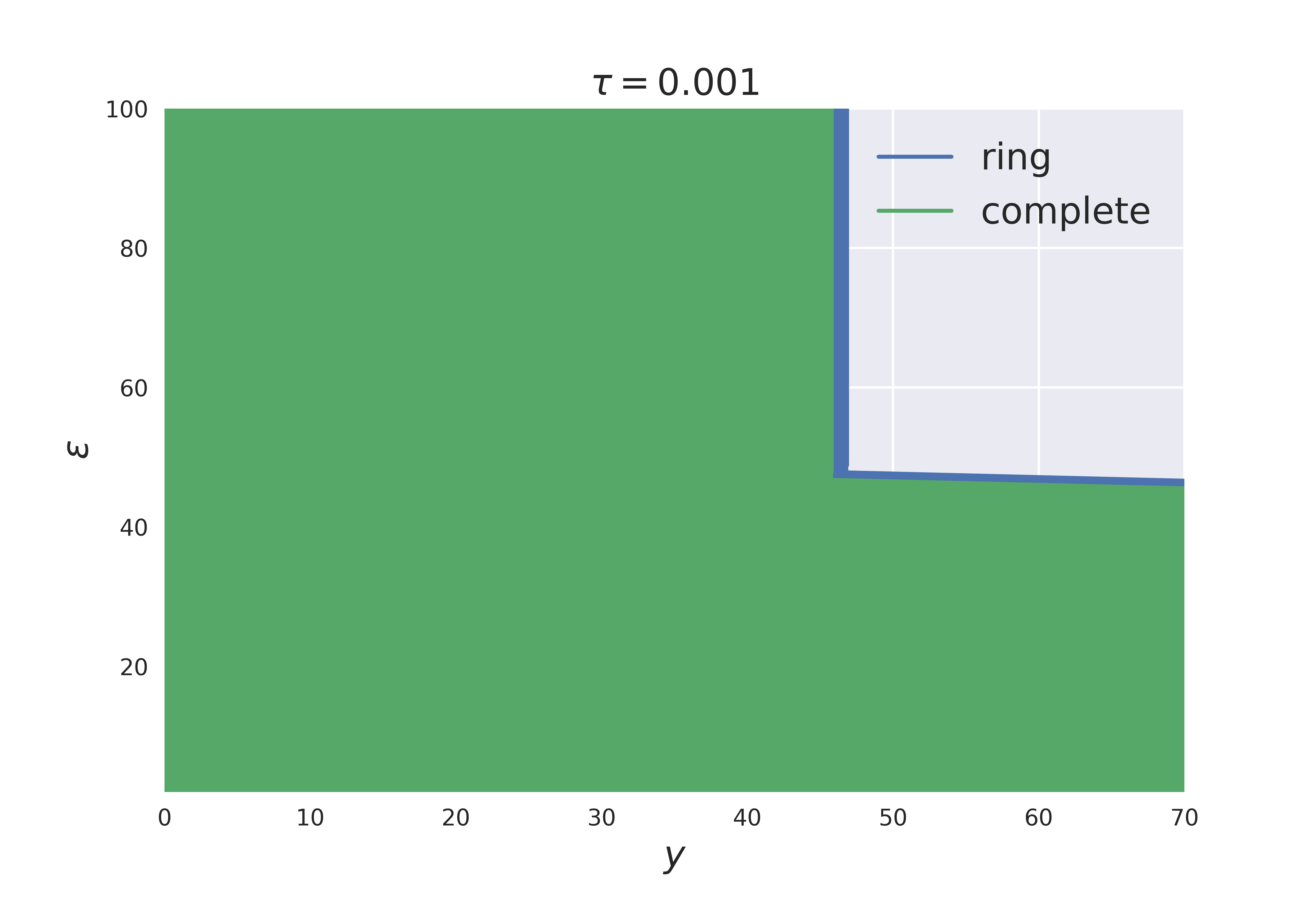}
\caption{Safe region for the ring and the complete networks for different values of $\tau$.
\label{fig:phase}}
\end{figure}
Naturally, as $\tau$ tends to zero, CoCos become vanilla bonds, and the two regions tend to coincide.
Conversely, as $\tau$ increases, the two regions tend to diverge more and more significantly.
In the second part of the theorem, the two unstable regions (for the ring and complete networks) are compared.
Remarkably, it appears that the safe region for the complete network is the smallest one.
This suggests that if the size of the shock is such that it does not determine a systemic triggering in the complete network, the latter is the most stable topology (first point).
However (second point), the triggering in the complete network becomes systemic for a shock size that is smaller than the one necessary for a systemic triggering in the ring network.
As such, and as illustrated by the blue region in Fig.~\ref{fig:phase}, there exists a medium shock size region where the ring network is more stable than the complete network.
For larger shocks, the triggering becomes systemic also in the ring network, and the two topologies are equivalently sub-optimal.

Fig.~\ref{fig:CORR} shows the extent of contagion and system's distress as a function of the shock size in a high exposure regime, for different network connectivity.
\begin{figure}[!htb]
\includegraphics[width=0.5\linewidth]{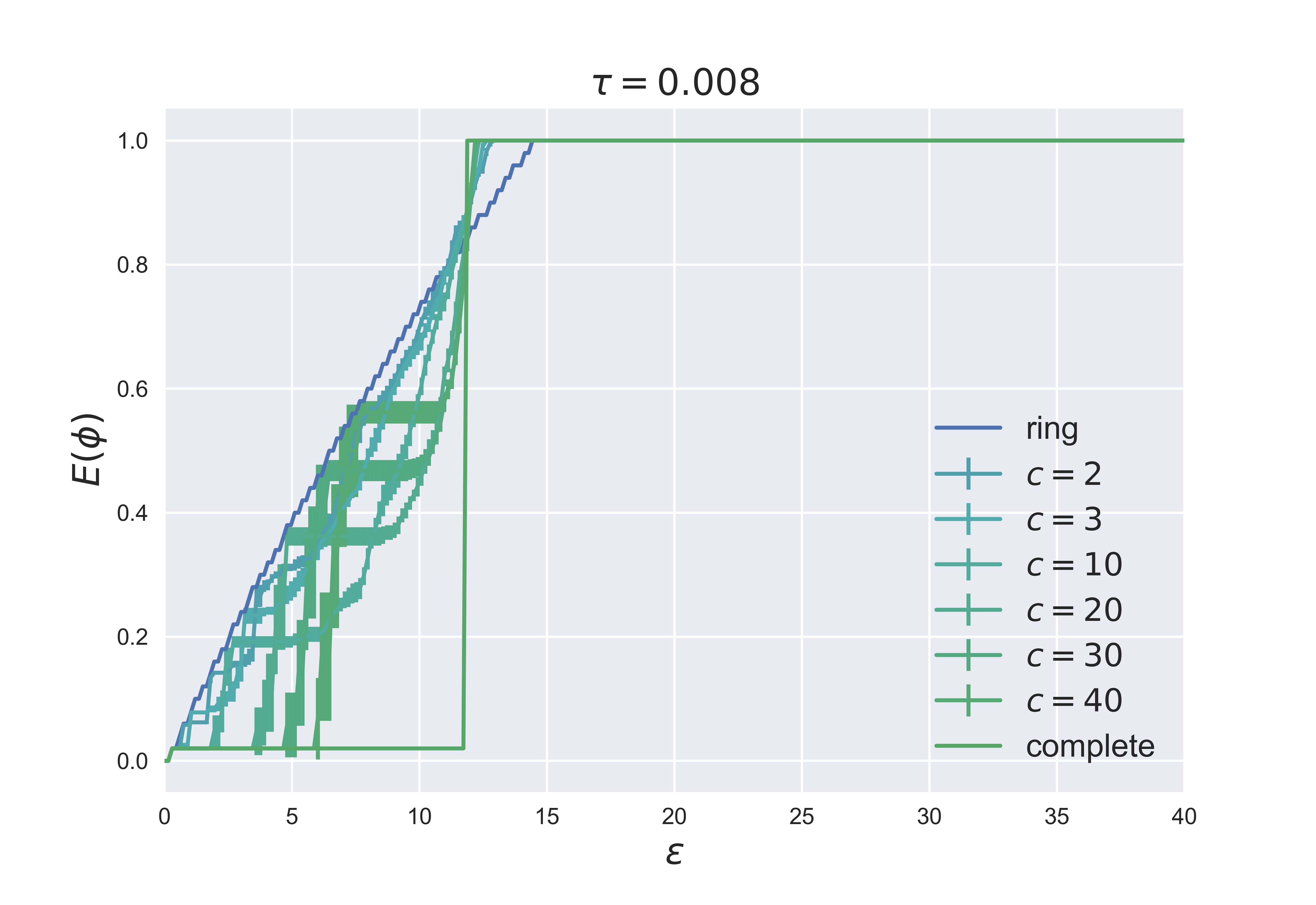}
\includegraphics[width=0.5\linewidth]{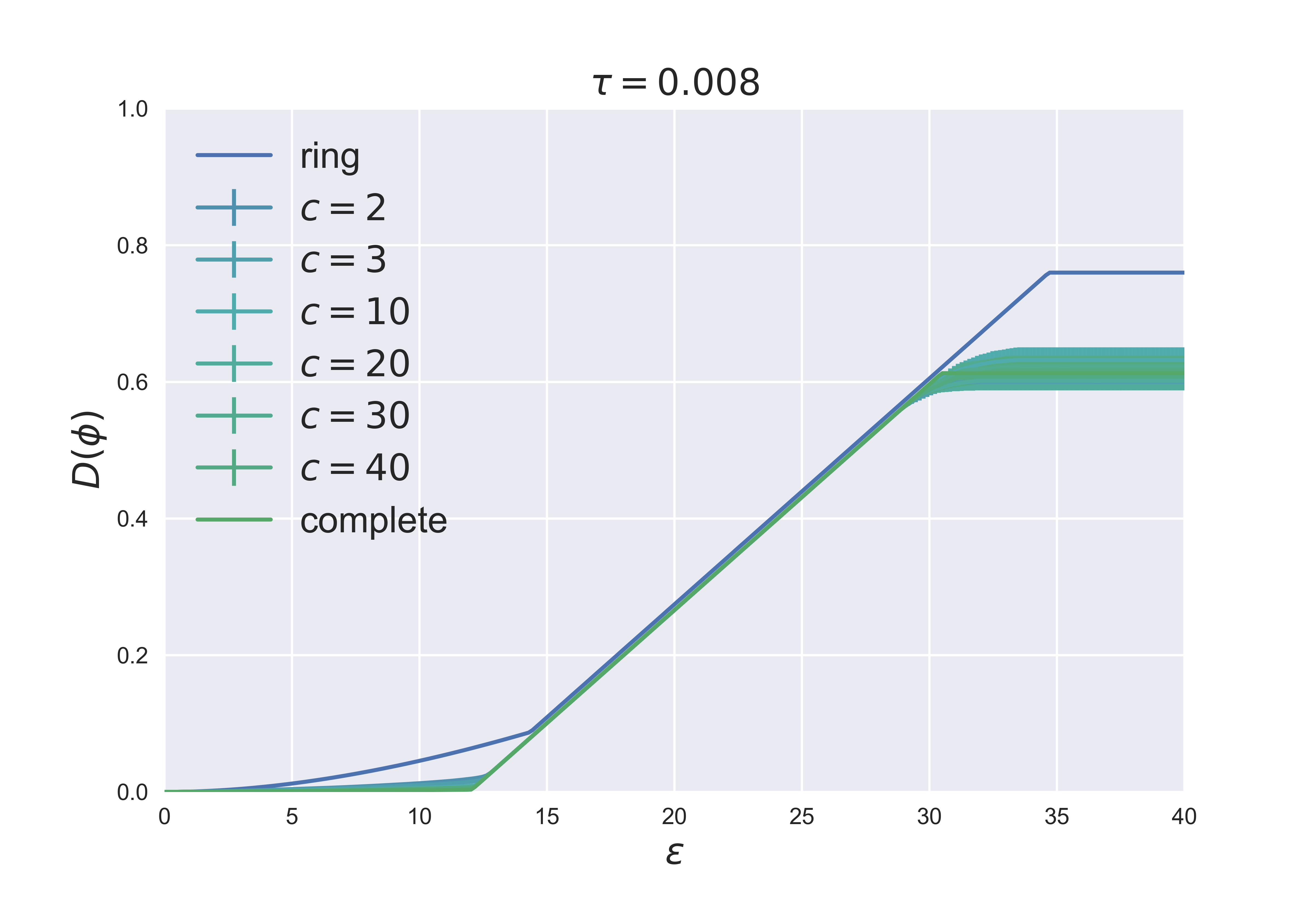}
\caption{Simulation on random regular networks with $N=50$, $a=21$, $s=20$, $y=75$.
Results are averaged over $10$ different realizations sampled from a directed configuration model.
\label{fig:CORR}}
\end{figure}
From the figure, we can observe that the system behaves similarly to the vanilla bond case for relatively small shocks. 
The ring network exhibits the highest resilience, the complete network the lowest, and intermediate connectivity levels result in jumps that define the region where higher connectivity enhances the system's robustness.
Note that, immediately after a jump occurs, the network enters a fragility phase where systems with lower connectivity are more robust.
For medium shock sizes, an inversion point is reached where all lines in Fig.~\ref{fig:CORR} intersect. 
Beyond this point, contagion becomes systemic in the complete network while the ring network remains relatively stable.
Finally, for larger shocks sizes, all the network structures are equivalent, since they cannot prevent the triggering to become systemic.
A final remark on the difference between CoCos and vanilla bonds concerns the shape of the shock size threshold $\epsilon(y,\tau)$ which now explicitly depends on the bank exposure $y$ (instead of $\tau$).
In particular, for both network topologies, the critical threshold decreases with the exposure (the higher the exposure the smaller the safe regions), while in the presence of vanilla bond ($\tau=0$) it is independent.
An important financial stability consequence is the existence of a maximum exposure, beyond which any (even small) shock can trigger a distressed bank and potentially lead to a systemic crisis.

\subsection{CoCos with equity liquidation}
\label{Subsec: Cocos with equity liquidation}

As a final step, we analyze the scenario of a trigger event in which the CoCos holder can liquidate (at a liquidation cost) the dollar amount of shares of the CoCos received as a consequence of the trigger.
Note that, neglecting this monetary value, would result in an overestimation of both the shock size and the degree of systemic distress, or, equivalently, an underestimation of the systemic crisis threshold.
From the previous section, if a shock in the economy leads to a trigger, then the CoCos issuer bank $i$ repays an amount $x^{nc}_i$ of \textit{unconverted} CoCos
$x^{nc}_i = \min [y, (1-\tau)h_i -s]^+$
while it \textit{converts} and uses the remaining amount $x^c_i=y-x^{nc}_i$.
If we denote with $\eta\in[0,1]$ the effective market value of the issuer bank's share after the trigger, we can generalize the debt repaying rule as
$x_i = x^{nc}_i+ \eta x_i^c= \eta y_i + (1-\eta) \min [y, (1-\tau)h_i -s]^+$
and, equivalently, the propagation rule in terms of the bank fitness becomes:
\begin{equation}
\phi_i =  \eta + (1-\eta)f_{y,s}( (1- \tau)h_i(\boldsymbol{\phi})),
\end{equation}
that generalizes Eq. ($\ref{rulecoco}$) and reduces to it in the limit of $\eta\to 0$. 
Again, there is an unique equilibrium regardless of the network structure (see Supplementary Material), that can be reached by iteration, starting from $\phi=(1,\ldots ,1)$ and propagating  an exogenous liquidity shock. 
The value of $\eta$ depends on various internal and external factors, including the overall quality of the bank, the state of the banking system, and the size of the shock affecting an individual bank.
Moreover, a CoCo conversion might depress the bank's equity, being a trigger usually seen as a bad signal for the bank issuer by market participants.
Acknowledging all these factors and their consequences, in this paper, we treat $\eta$ as exogenously given, and analyze the effects on the level of bank fitness for different values of $\eta$.\\
First, for $\eta=0$ (shares have zero value) the bank cannot use any money from the conversion and we indeed retrieve the results of the previous section.
Second, for $0 < \eta \leq 1$ the equity conversion mitigates the propagation of the shock in a non-linear way, and $\eta$ becomes the minimum possible bank fitness.
Fig.~\ref{fig:CORRd} shows the extent of contagion as a function of the shock for different values of $\eta$. 
\begin{figure}[!htb]
\includegraphics[width=0.5\linewidth]{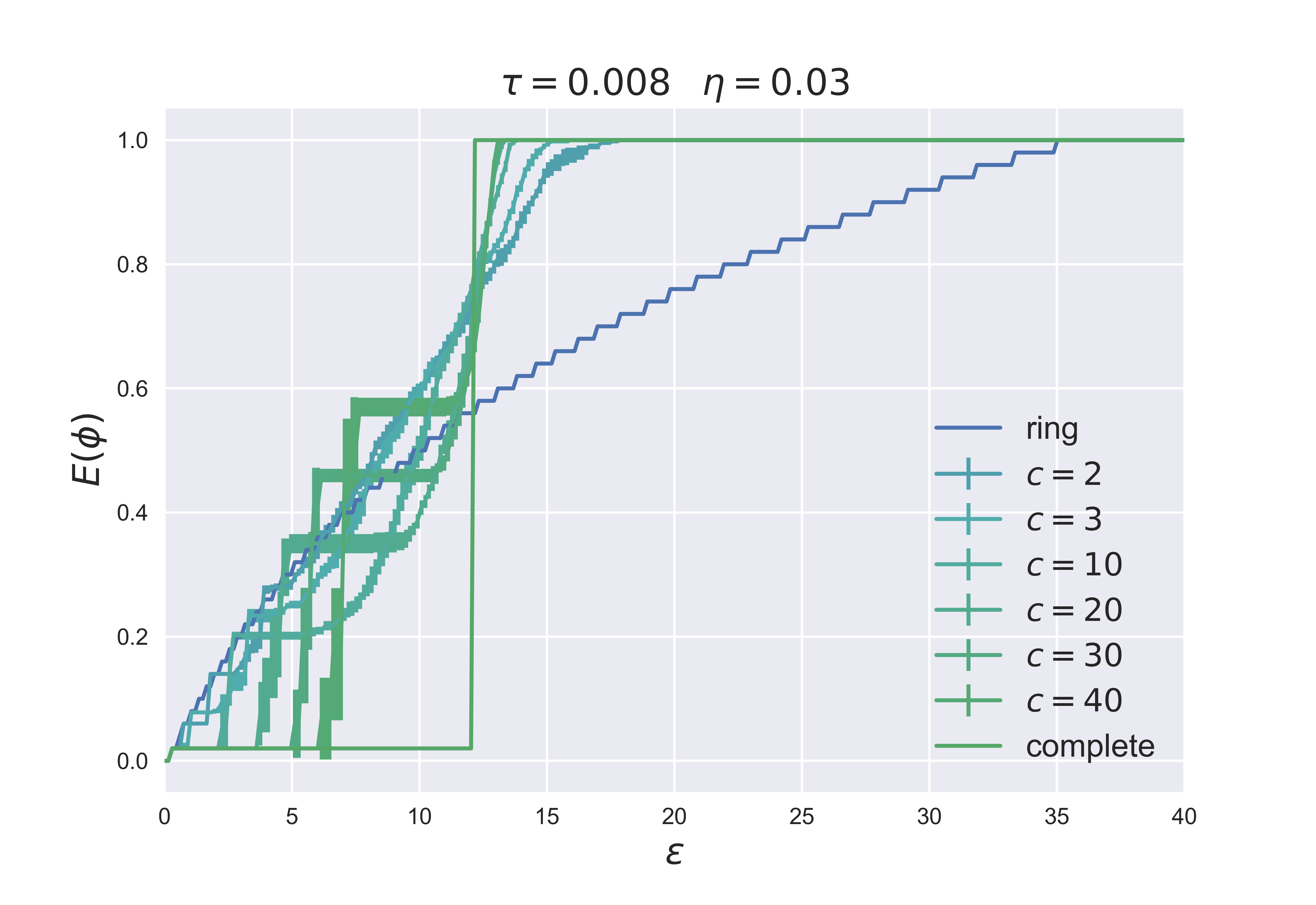}
\includegraphics[width=0.5\linewidth]{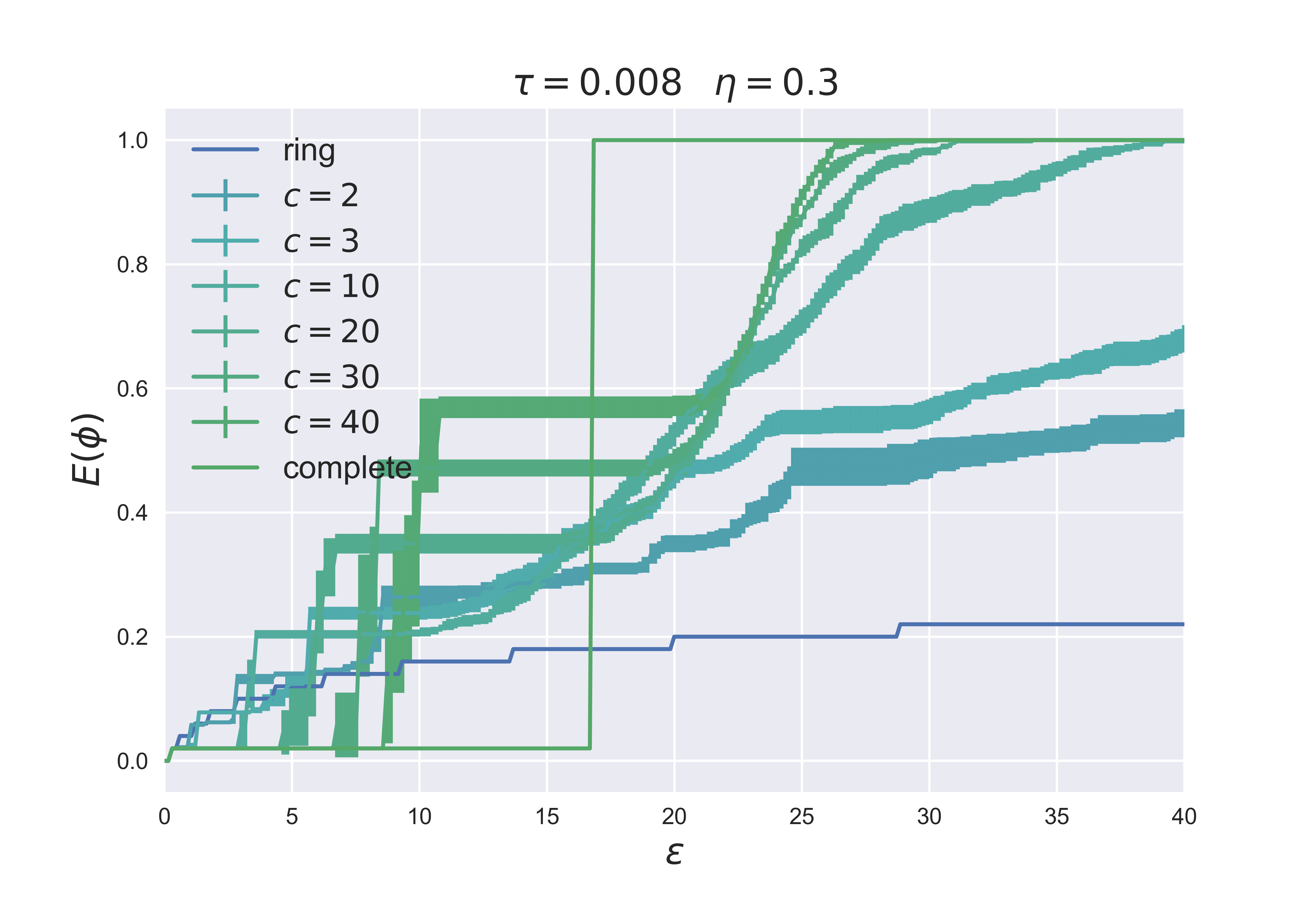}
\caption{Simulation on Random Regular networks with $N=50$, $a=21$, $s=20$, $y=75$.
Results are averaged over $10$ different realizations sampled from a directed configuration model.
\label{fig:CORRd}}
\end{figure}
For $\eta = 0.03$ (left panel) the ring network is more stable than both the complete network and all the intermediate ones.
In fact, while the extent of contagion becomes systemic (=1) in the ring network only for large shocks ($\epsilon \geq 35$), the complete network is the one that first reaches maximal instability (for $\epsilon \geq 12$), followed by the intermediate network.
With respect to the intermediate networks, the lightly interconnected networks are again more stable than the highly interconnected ones, and with a stronger magnitude than the case with no equity conversion.
As shown in the right panel of Fig.~\ref{fig:CORRd}, the stability of the ring and lightly interconnected networks strengthens as $\eta$ increases, but non-linearly. 
For $\eta=0.3$, the ring network reclaims its position as the most stable, never reaching full contagion and achieving maximum contagion at around E$(\boldsymbol{\phi})=20\%$ even for the largest shocks. 
Moreover, and due to the non-linearity of the change, lightly interconnected networks ($c=2$ and $c=3$) are never susceptible to systemic contagion and reach their maximum value at E$(\boldsymbol{\phi}) \approx 60\%$.
Finally, for the set of medium to highly interconnected networks, the extent of contagion grows more than linearly, with the complete network showing a systemic contagion already for small shocks ($\epsilon=17$).
In summary, our results clearly show that the equity conversion is the most beneficial for no or lowly interconnected networks, and is the most detrimental for highly interconnected networks.
In fact, highly interconnected networks are more prone to financial contagion in presence of a CoCos trigger.
Fig.~\ref{fig:critshock} summarizes the results by showing the level of the critical size of the shock $\epsilon^*(\eta, \tau)$ as a function of the conversion value $\eta$. 
\begin{figure}[!htb]
\centering
\includegraphics[width=0.5\linewidth]{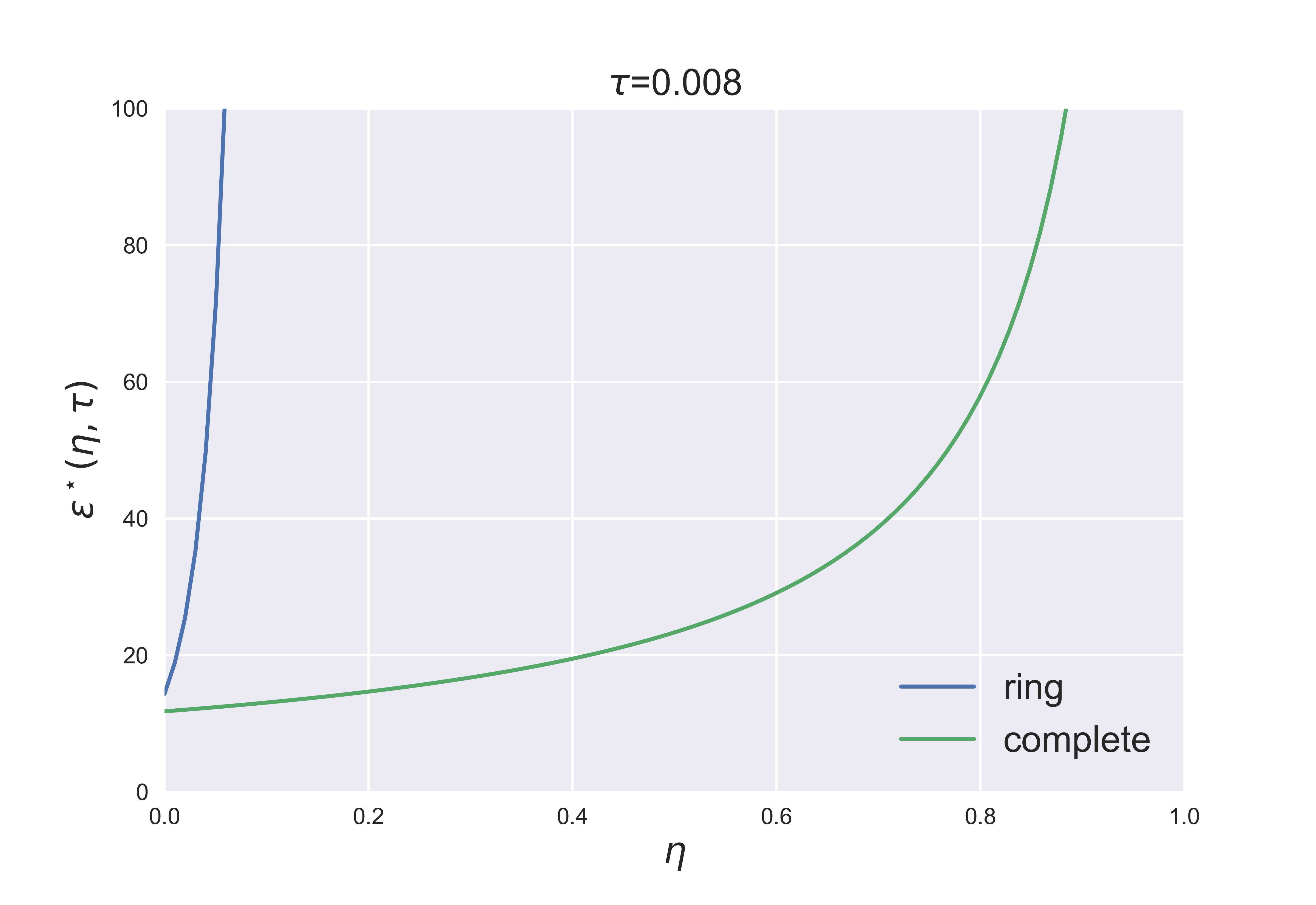}
\caption{Critical shock for the ring and the complete network as a function of $\eta$, see Supplementary Material for the explicit expressions.
}\label{fig:critshock}
\end{figure}
It is evident how, in the ring network, the critical size of the shock grows much faster than in the complete network.

It is important to note that the equity conversion resulting from the trigger might be beneficial for the banking system when the shock amplifies systemic financial risk. 
Conversely, unnecessary conversions can penalize and affect bond (CoCo) holders.
We define the conversions to be unnecessary whenever a shock leads to a systemic conversion without creating systemic risk in the economy.
To better elucidate this result, we compare Fig.~\ref{fig:RR} with Fig.~\ref{fig:critshock}.
Fig.~\ref{fig:RR} shows that in the absence of CoCos the risk is systemic only from $\epsilon \geq 50$, while Fig.~\ref{fig:critshock} illustrates that the ring and the lightly interconnected networks experience a systemic trigger for much smaller shocks in the economy.
From a market perspective, to prevent automatic triggering below a certain threshold, CoCo issuers could consider issuing CoCos with dual-trigger conversion, incorporating both an endogenous trigger linked to the balance sheet or firm's equity value and an exogenous trigger controlled by the regulatory authority (e.g.,~\cite{McDonald_2013}).

\section{Conclusions}
\label{sec: Conclusions}

Contingent convertible bonds (CoCos) are regulatory financial instruments introduced in the aftermath of the 2008-2009 financial crisis with the aim of mitigating systemic risk in the financial system during challenging times. 
In this paper, we present various balance-sheet-based interbank financial networks, both with and without CoCos. 
We demonstrate that the network's structure significantly influences the effectiveness of CoCos as risk-mitigating securities.
Specifically, we demonstrate that for ring and complete networks the state (phase) transition in a network without CoCos is also naturally operational in a network with CoCos, thus confirming the robust-yet-fragile result documented by~\cite{Acemoglou15}.
We also demonstrate that, in the presence of moderate shocks, lightly interconnected networks enhance financial system stability more than highly interconnected ones. Additionally, we highlight the importance of considering the type of interbank financial network for maximizing the effectiveness of CoCos for both issuers and investors. 
Overall, policymakers and regulators should carefully consider the interbank network's role in assessing the potential financial contagion dynamics of CoCos. Despite the focus of the present analysis is on the role of the network connectivity, as a first order characterization of the network topology, it would be crucial for future research to analyze the role of Coco bonds in the case of an interbank network with degree heterogeneity or a more realistic macroscopic structure \cite{van2014finding,bardoscia2017pathways,luu2021collateral,mazzarisi2020dynamic,wilinski2019detectability}.

\section*{Acknowledgements}

This work was partially supported by project SERICS (PE00000014) under the MUR National Recovery and Resilience Plan funded by the European Union - NextGenerationEU. 
DT acknowledges GNFM-Indam for financial support.

\section*{Author contributions statement}
GC was the originator of the project. 
CS and DT  equally contributed to set up the model, conceiving the experiments, analyzing the results and writing the manuscript.

\section*{Data Availability}
The datasets used and/or analysed during the current study are available from the corresponding author on reasonable request.


\bibliography{My_Biblio}

\end{document}